\documentclass[aps,pra,reprint,superscriptaddress,longbibliography,nofootinbib,floatfix]{revtex4-2}

\usepackage{amsmath,amssymb,amsthm}
\usepackage{graphicx}
\usepackage{orcidlink}
\usepackage{bm}
\usepackage{hyperref}
\hypersetup{colorlinks=true,linkcolor=blue,citecolor=blue,urlcolor=blue}

\newtheorem{proposition}{Proposition}
\newtheorem{lemma}{Lemma}
\theoremstyle{remark}
\newtheorem{remark}{Remark}
\newtheorem{observation}{Observation}

\newcommand{\tr}{\operatorname{Tr}}
\newcommand{\id}{\openone}
\newcommand{\coh}{C_{\ell_1}}
\newcommand{\pt}{T_{\mathrm{R}}}

\begin{document}

\title{\texorpdfstring{Certified coherent, informative, and non-entanglement-breaking fixed points\\ of future-referential quantum feedback}{Certified coherent, informative, and non-entanglement-breaking fixed points of future-referential quantum feedback}}

\author{Eran Kopel\,\orcidlink{0000-0003-4657-8636}}
\email{erankopel@tauex.tau.ac.il}
\affiliation{Tel Aviv University, Tel Aviv, Israel}

\date{\today}

\begin{abstract}
We study quantum processes in which information extracted from a forward simulation is returned as input to an earlier internal time of the simulated dynamics: externally the protocol is an ordinary causally ordered circuit, but internally it is future-referential. Contracting a process tensor with a leakage instrument and a controller induces a completely positive trace-preserving map on a message register, and we classify its fixed points by five operational properties: stability, informativeness, feedability, coherence, and preservation of quantum correlations. Four results separate notions that informal discussions of ``information from the future'' often conflate. A two-parameter unitary-dilation family yields a closed-form, globally attractive, coherent fixed point (Proposition~\ref{prop:exact}), yet is entanglement breaking whenever future records are perfectly distinguishable (Lemma~\ref{lem:nogo}). Releasing that orthogonality, a four-parameter partial-swap family admits a nonempty open non-entanglement-breaking region (Proposition~\ref{prop:openset}), with an explicit Choi partial-transpose neighborhood of half-width $0.0163\pi$ (Proposition~\ref{prop:npt}). Combining outward-rounded interval enclosures with perturbation bounds tracking the channel and its stationary-state drift, we certify an explicit parameter square of half-width $0.0013\pi$ on which the feedback channel is simultaneously strictly contractive (margin $\ge 0.237$), coherent ($\ge 0.416$), informative about the designated future variable ($\ge 0.172$ bits), and non-entanglement-breaking (NPT margin $\ge 0.188$) (Proposition~\ref{prop:combined}). Direct evaluation shows all four properties persisting over a region an order of magnitude larger, so the certified square is a proof of principle rather than a phase boundary. All enclosures and margins are confirmed by a machine-verified ball-arithmetic certificate, and the complete code and certificate accompany the paper.
\end{abstract}

\keywords{process tensors; quantum feedback; fixed points; self-consistency; Choi state; entanglement breaking}

\maketitle

\section{Introduction}
\label{sec:intro}

Process tensors and quantum combs describe open quantum dynamics under interventions at multiple times, retaining temporal correlations and environmental memory \cite{Pollock2018a,Pollock2018b,Chiribella2008,Chiribella2009,MilzModi2021,CygorekGauger2025,Zambon2024}. They are therefore the natural language for a \emph{future-feedback} protocol: a simulator advances a model of a system, a physical or logical side channel exposes a restricted predicate of the simulated future, and a controller returns that predicate to the simulated world at an earlier internal time.

Once the returned message can influence the event it predicts, prediction becomes reflexive: the message is simultaneously evidence about a later event and a control input to that event. A deterministic classical agent that always acts against a disclosed binary forecast produces an anti-prediction cycle. Quantum theory enlarges the state space, permitting mixed or coherent stationary messages even when a deterministic fixed point is unavailable.

The central question is not whether an abstract fixed point exists---finite-dimensional quantum channels always possess stationary states. The relevant question is which multi-time processes admit a fixed point that is \emph{attractive}, \emph{informative} about a selected future variable, \emph{physically feedable}, and \emph{genuinely quantum}, rather than merely a coherent state prepared by an entanglement-breaking channel.

This question connects several literatures that are usually kept apart. Consistency conditions for dynamics that interact with their own future have been analyzed for closed timelike curves, both in Deutsch's nonlinear fixed-point formulation \cite{Deutsch1991} and in postselected teleportation \cite{Lloyd2011}; indefinite causal structure has been formalized through process matrices \cite{OreshkovCostaBrukner2012} and quantum causal models \cite{CostaShrapnel2016}. Coherent and measurement-based quantum feedback control supply the operational reading of a message that is generated by the dynamics and then fed back \cite{Lloyd2000,WisemanMilburn2010,Grimsmo2015}. Finally, recent work on the computer-science structure of simulation scenarios \cite{Wolpert2025} motivates asking what quantum information theory has to say when a simulation leaks information about its own future to its interior. Here we deliberately strip away the metaphysics and keep only an operational core: multi-time processes, instruments, channels, and fixed points.

We develop the question through a sequence of increasingly strong models. The progression is intentional: Proposition~\ref{prop:exact} proves coherent stationary output in an exactly solvable family; Lemma~\ref{lem:nogo} shows why coherent output alone is insufficient; Proposition~\ref{prop:openset} establishes an open non-entanglement-breaking phase; Proposition~\ref{prop:npt} supplies an explicit analytic NPT neighborhood; and Proposition~\ref{prop:combined} intersects four perturbative certificates to guarantee all desired properties on one explicit parameter square, using only conservative outward-rounded interval enclosures of reference quantities. Section~\ref{sec:operational} states precisely what is, and is not, claimed to distinguish this construction from ordinary time-delayed coherent feedback.

Two disclaimers apply throughout. First, nothing in this paper establishes retrocausality in external laboratory time; the ``future'' is defined relative to the internal simulated clock, and the external protocol is an ordinary causally ordered quantum circuit. Second, the interval enclosures quoted below were originally formed as conservative outward-rounded working enclosures around high-precision floating-point recomputations; the directed-rounding workflow of Appendix~\ref{app:protocol} has since been executed with rigorous ball arithmetic, and every enclosure and certificate margin quoted in this paper is confirmed by the resulting machine-verified certificate and by an independent second-stack audit, conditional on the correctness of the underlying verified-arithmetic libraries.

\section{Multi-time feedback framework}
\label{sec:framework}

\subsection{Process tensor and the induced message channel}
\label{sec:tensor}

Let $W$ denote the simulated world, $M$ the returned message register, $F$ a future-event register, $L$ a leakage probe, and $E$ the remaining simulator environment. A process tensor $\Upsilon$ maps a sequence of interventions to later states \cite{Pollock2018a,Chiribella2009,MilzModi2021}. Contracting the relevant slots of $\Upsilon$ with a simulator $S$, a leakage instrument $\{\mathcal{M}_l\}$, and a controller $\{\mathcal{C}_l\}$, one feedback round induces a message channel
\begin{equation}
\Phi(\rho_M)\;=\;\sum_l \mathcal{C}_l\!\left[\,\mathcal{M}_l\!\left(\,S_\Upsilon(\rho_W\otimes\rho_M)\,\right)\right].
\label{eq:message-channel}
\end{equation}
When no hidden postselection is used, $\Phi$ is completely positive and trace preserving (CPTP). Self-consistency of the returned message is the stationary-state equation
\begin{equation}
\rho_M^{*}\;=\;\Phi(\rho_M^{*}).
\label{eq:fixedpoint}
\end{equation}
The external order remains ordinary: encode, simulate, couple to a probe, and return a message into a not-yet-executed segment of the internal simulation. The construction does not by itself imply a closed timelike curve or signalling into the external physical past; it is in this sense strictly weaker than the Deutsch \cite{Deutsch1991} and postselected \cite{Lloyd2011} closed-timelike-curve models, with which it shares only the fixed-point equation \eqref{eq:fixedpoint}.

Two features of this construction are genuinely process-theoretic rather than circuit-specific. First, complete positivity and trace preservation of the induced map follow from the contraction structure of the process tensor itself: any $\Upsilon$ with a causal slot at the return point, contracted with a CPTP simulator and instruments, yields a CPTP message channel \cite{Pollock2018a,Chiribella2009,MilzModi2021}; hidden postselection is precisely what would break this, which is why it is excluded throughout. Second, the fixed-point equation \eqref{eq:fixedpoint} is linear. In Deutsch's nonlinear consistency condition \cite{Deutsch1991}, multiple fixed points are generic and a selection prescription (such as maximum entropy) is required; here the standard spectral theory of quantum channels applies \cite{Wolf2012}, a unique attractive fixed point is generic, and the nontrivial content of the problem is not existence but the classification of fixed-point properties that occupies the rest of the paper.

\subsection{Operational criteria}
\label{sec:criteria}

Table~\ref{tab:criteria} lists the five operational properties by which we classify fixed points. Coherence is quantified by the $\ell_1$ measure $\coh(\rho)=\sum_{i\neq j}|\rho_{ij}|$ in the operational forecast basis \cite{Baumgratz2014,Streltsov2017}. A channel is entanglement breaking iff its Choi state is separable \cite{HSR2003}; for qubit channels this is equivalent, by the Peres--Horodecki criterion \cite{Peres1996,Horodecki1996,Ruskai2003}, to positivity of the partial transpose (PPT) of the Choi state, so a negative partial transpose (NPT) witnesses preservation of some input entanglement.

\begin{table}[b]
\caption{\label{tab:criteria}Operational criteria for classifying fixed points of the induced message channel.}
\begin{ruledtabular}
\begin{tabular}{p{0.27\columnwidth}p{0.63\columnwidth}}
\textbf{Property} & \textbf{Operational requirement} \\
\colrule
Stability & Perturbations decay under repeated application of the closed-loop map. \\
Informativeness & The probe and the designated future variable share positive quantum mutual information (a computable surrogate for the accessible information \cite{Holevo1973}). \\
Feedability & The returned object is generated by an admissible trace-preserving instrument and may be reinserted compositionally. \\
Coherence & The stationary message has nonzero off-diagonal content in the operational forecast basis. \\
Non-entanglement breaking & The message channel preserves entanglement with a reference for at least one input; equivalently, its qubit Choi state is NPT. \\
\end{tabular}
\end{ruledtabular}
\end{table}

Two distinct quantitative notions of stability appear below, and we distinguish them once and for all. The \emph{stability gap} $1-r(A)$, with $r(A)$ the spectral radius of the Bloch matrix $A$ of Eq.~\eqref{eq:affine}, controls the asymptotic rate at which perturbations decay and is the quantity displayed in Figs.~\ref{fig:exact} and~\ref{fig:swap} and used in the parameter search of Sec.~\ref{sec:openset}. The \emph{contraction margin} $1-\lVert A\rVert_2$, with $\lVert\cdot\rVert_2$ the spectral norm, is the stronger, one-step quantity required by the Banach argument, and it is the quantity certified in Propositions~\ref{prop:openset} and~\ref{prop:combined}. Since $r(A)\le\lVert A\rVert_2$, the contraction margin is the more conservative of the two; at the reference point \eqref{eq:refpoint} they are $0.397$ and $0.287$ respectively.

\subsection{What is operational about the future reference}
\label{sec:operational}

Since the external protocol is an ordinary causally ordered circuit, it is fair to ask what, if anything, distinguishes this framework from time-delayed coherent quantum feedback \cite{Lloyd2000,WisemanMilburn2010,Grimsmo2015}. The distinction is not in the external physics but in the self-consistency structure. In an ordinary feedback loop the returned signal is a functional of past records; here the returned message is computed from the same run's later segment, so the stationary message must satisfy $\rho_M^{*}=\Phi(\rho_M^{*})$ with $\Phi$ itself built from the world's own forward response.

Three operational consequences follow. First, the fixed point moves when the simulated world changes: the loop reports properties of the world's future dynamics, not of an exogenous signal. Second, the loop can fail to possess any usable fixed point at all---the deterministic anti-predictor of Sec.~\ref{sec:classical}---an obstruction with no analogue in feedforward from an external source. Third, the quantumness of the loop is a property of the Choi state of $\Phi$ itself and can be destroyed by the world's own record keeping, as Lemma~\ref{lem:nogo} shows. We emphasize that none of this is externally observable as retrocausality: externally the construction is a particular coherent feedback circuit, and the value of the framework is the classification and certification machinery it forces, which is independent of interpretation.

\section{A solvable classical anti-predictor}
\label{sec:classical}

The baseline model dephases $M$ and uses a binary anti-compliance response $r$ (the probability that the world acts against the disclosed forecast), followed by a binary symmetric readout of the future event with accuracy $q$. If $p_n=P(m_n=1)$, one round gives the affine recursion
\begin{equation}
p_{n+1}=a+\lambda p_n,\qquad
\begin{aligned}
a&=(1-q)+(2q-1)r,\\
\lambda&=-(2q-1)(2r-1).
\end{aligned}
\label{eq:classical}
\end{equation}

\begin{observation}
\label{obs:classical}
For $|\lambda|<1$ the unique fixed point of \eqref{eq:classical} is $p^{*}=1/2$, globally attractive with stability gap $1-|\lambda|$. At stationarity the returned bit carries $I(F\!:\!m')=1-h_2(1-q)$ bits about the future readout, where $h_2$ is the binary entropy. At $q=r=1$ the loop is perfectly informative but forms a two-cycle ($\lambda=-1$): the deterministic anti-predictor paradox.
\end{observation}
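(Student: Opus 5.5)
The plan is to treat the recursion \eqref{eq:classical} as an affine contraction on $[0,1]$. The stationarity, information and cycle claims then reduce to short computations with the explicit coefficients $a$ and $\lambda$.

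\emph{Fixed point and stability.} For $|\lambda|<1$ the map $p\mapsto a+\lambda p$ is a contraction with Lipschitz constant $|\lambda|$, so Banach's theorem gives a unique, globally attractive fixed point $p^*=a/(1-\lambda)$. I would then verify directly that $a=(1-\lambda)/2$:
\[
1-\lambda=1+(2q-1)(2r-1)=2\bigl[(1-q)+(2q-1)r\bigr]=2a,
\]
using the identity $(2q-1)(2r-1)=2(2q-1)r-(2q-1)$. Hence $p^*=1/2$. The error obeys $p_n-\tfrac12=\lambda^n(p_0-\tfrac12)$, which gives geometric decay with stability gap $1-|\lambda|$, matching the definition in Sec.~\ref{sec:criteria} with $A=\lambda$ a $1\times1$ matrix.

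\emph{Information at stationarity.} Here I would make the model explicit. At stationarity the fed-back bit $m$ is uniform. The world's future event $F$ is set to $m$ or $1-m$ according to $r$, and since $P(F=1)=r/2+(1-r)/2=1/2$, $F$ is also uniform. The new message $m'$ is a binary symmetric readout of $F$ with accuracy $q$. Thus $F\to m'$ is a binary symmetric channel with crossover probability $1-q$ and uniform input, so
\[
I(F\!:\!m')=H(m')-H(m'|F)=1-h_2(1-q),
\]
and this holds independently of $r$. The main care point is matching the recursion's composition (first $r$, then $q$) to this channel description. I would check it by confirming that $P(m'=1\mid m)$ reproduces $a+\lambda m$ for $m\in\{0,1\}$: for $m=1$ this gives $a+\lambda$, for $m=0$ it gives $a$.

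\emph{Paradox at $q=r=1$.} Substituting gives $a=1$ and $\lambda=-1$, so $p_{n+1}=1-p_n$. Every orbit with $p_0\neq1/2$ is a two-cycle, and no deterministic fixed point exists. Meanwhile $h_2(0)=0$ gives one full bit of information, so the loop is perfectly informative: the deterministic anti-predictor paradox. I expect no real obstacle here; the only nontrivial step is the algebraic identity $a=(1-\lambda)/2$.
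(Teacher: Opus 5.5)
Your proposal is correct and follows the paper's own reasoning: the paper's only stated justification is the identity $2a=1-\lambda$, which pins $p^{*}=a/(1-\lambda)=1/2$. Your error recursion $p_n-\tfrac12=\lambda^n(p_0-\tfrac12)$, your check that $P(m'=1\mid m)$ reproduces $a+\lambda m$, and your binary-symmetric-channel calculation make the remaining claims explicit. One small refinement: at $q=r=1$ the formula $1-h_2(1-q)$ assumes a uniform message, which on the two-cycle holds only at the non-attractive point $p=1/2$ (on the deterministic cycle $I(F\!:\!m')=h_2(p_n)=0$), so ``perfectly informative'' is better justified directly from $q=1$, i.e.\ $m'=F$ with certainty.
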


The identity $2a = 1-\lambda$ holds for all $(q,r)$, which is why the stationary marginal is pinned to $1/2$: the loop randomizes \emph{what is sent} while remaining correlated with \emph{what will happen}. Away from the deterministic corner there is an open region that is simultaneously stable and informative. This separation of marginal randomness from predictive correlation motivates the quantum models below, where the analogous questions concern coherence and entanglement preservation rather than marginals.

\section{Coherent unitary-dilation model}
\label{sec:model}

We use three qubits $M$, $F$, and $L$, with the auxiliaries initialized as $|00\rangle\langle 00|_{FL}$. One feedback round is the unitary
\begin{equation}
U_{\mathrm{round}}(\theta,\kappa,\phi,\beta)=R_y^{M}(\beta)\,U_{\mathrm{fb}}(\phi)\,U_{\mathrm{weak}}(\kappa)\,U_{W}(\theta),
\label{eq:round}
\end{equation}
with $R_y(\alpha)=e^{-i\alpha Y/2}$ and
\begin{align}
U_{W}(\theta)&=|0\rangle\langle 0|_M\otimes R_y(\pi-2\theta)_F\otimes\id_L\nonumber\\
&\quad+|1\rangle\langle 1|_M\otimes R_y(2\theta)_F\otimes\id_L,
\label{eq:UW}\\
U_{\mathrm{weak}}(\kappa)&=\exp\!\left[-i\tfrac{\kappa}{2}\,Z_F\otimes Y_L\right],
\label{eq:Uweak}\\
U_{\mathrm{fb}}(\phi)&=\cos\phi\,\id-i\sin\phi\,\mathrm{SWAP}_{ML}.
\label{eq:Ufb}
\end{align}
The world unitary $U_W$ maps the two message alternatives to the event states
\begin{equation}
|\psi_0\rangle_F=R_y(\pi-2\theta)|0\rangle,\quad
|\psi_1\rangle_F=R_y(2\theta)|0\rangle,
\end{equation}
with overlap $\langle\psi_0|\psi_1\rangle=\sin 2\theta$: an \emph{anti-copy} that is perfect at $\theta=0$ and partial for $\theta>0$. The probe coupling $U_{\mathrm{weak}}$ writes which-event information into $L$ with conditional overlap $\cos\kappa$; the partial swap $U_{\mathrm{fb}}$ feeds the probe back into the message register; and $R_y^M(\beta)$ is a controller rotation.

Tracing out $F$ and $L$ yields a CPTP qubit channel $\Phi_{\theta,\kappa,\phi,\beta}$. In Bloch form, $\rho_M=(\id+\bm{v}\cdot\bm{\sigma})/2$ evolves as
\begin{equation}
\bm{v}'=A\bm{v}+\bm{c},\quad
A_{ij}=\tfrac12\tr[\sigma_i\Phi(\sigma_j)],\quad c_i=\tr[\sigma_i\Phi(\tfrac{\id}{2})],
\label{eq:affine}
\end{equation}
where $\Phi$ is extended linearly to the traceless Pauli operators; this is the standard affine Bloch representation of a qubit channel. The operational criteria of Table~\ref{tab:criteria} are then evaluated as follows: stationary coherence is $\coh(\rho_M^*)=\sqrt{x^{*2}+y^{*2}}$; informativeness is the quantum mutual information $I(F\!:\!L)$ of the joint probe--event state immediately \emph{before} the feedback swap, evaluated with the stationary message $\rho_M^{*}$ as input; and channel-level quantumness is witnessed by a negative eigenvalue of the partial transpose of the normalized Choi state $J(\Phi)=(\mathrm{id}\otimes\Phi)(|\Phi^+\rangle\langle\Phi^+|)$.

\section{Exact coherent-output family}
\label{sec:exact}

Set $\theta=0$ and $\phi=\pi/2$ (full swap), retaining $\kappa$ and $\beta$. Direct calculation gives
\begin{equation}
\begin{aligned}
x'&=\sin\beta\cos\kappa-\cos\beta\sin\kappa\, z,\\
y'&=0,\\
z'&=\cos\beta\cos\kappa+\sin\beta\sin\kappa\, z,
\end{aligned}
\label{eq:exact-map}
\end{equation}
so that
\begin{equation}
A=\begin{pmatrix}0&0&-\cos\beta\sin\kappa\\0&0&0\\0&0&\sin\beta\sin\kappa\end{pmatrix},\quad
\bm{c}=\begin{pmatrix}\sin\beta\cos\kappa\\0\\\cos\beta\cos\kappa\end{pmatrix}.
\label{eq:exact-Ac}
\end{equation}

\begin{proposition}[Exact coherent-output fixed point]
\label{prop:exact}
For $0\le\kappa,\beta<\pi/2$, the restricted channel has a unique globally attractive fixed point. Its only nonzero convergence eigenvalue is $\lambda=\sin\beta\sin\kappa$, and its fixed Bloch vector is
\begin{equation}
z^{*}=\frac{\cos\beta\cos\kappa}{1-\sin\beta\sin\kappa},\qquad
x^{*}=\frac{\cos\kappa\,(\sin\beta-\sin\kappa)}{1-\sin\beta\sin\kappa},
\label{eq:exact-fp}
\end{equation}
and $y^{*}=0$.
Except on the diagonal $\beta=\kappa$, the stationary message has nonzero computational-basis coherence. For nonzero probe coupling $\kappa>0$ and nondegenerate event weights, $I(F\!:\!L)>0$.
\end{proposition}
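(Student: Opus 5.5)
The plan is to take the affine Bloch form \eqref{eq:exact-map}--\eqref{eq:exact-Ac} as given by direct calculation, and to read every claim off the matrix $A$ and the vector $\bm{c}$. If a self-contained check of \eqref{eq:exact-map} is wanted, I would obtain it by conjugating Paulis through the circuit. At $\theta=0$ the world unitary $U_W$ maps $|0\rangle_M$ to $|1\rangle_F$ and $|1\rangle_M$ to $|0\rangle_F$. The weak coupling then rotates $L$ by $\pm\kappa$ conditioned on $Z_F$. The full swap ($\phi=\pi/2$, up to a global phase $-i$) moves $L$ into $M$, and the final $R_y(\beta)$ rotates in the $xz$ plane. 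In this picture the probe state conditioned on message value $z=\pm1$ has Bloch vector $(\mp\sin\kappa,0,\cos\kappa)$, which reproduces the affine dependence on $z$ before rotation by $\beta$.

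\textbf{Spectrum and convergence.} $A$ has nonzero entries only in its third column, so it has rank at most one. Its characteristic polynomial is $\lambda^2(\lambda-A_{33})$, so the eigenvalues are $0,0,\sin\beta\sin\kappa$. For $0\le\kappa,\beta<\pi/2$ we have $|\sin\beta\sin\kappa|<1$, so the spectral radius is $r(A)<1$. Global attractivity is best shown directly rather than through a norm bound. Since $z'=\cos\beta\cos\kappa+\lambda z$, the $z$-component converges geometrically to its unique fixed point, with error $\lambda^n(z_0-z^*)$. Then $x_n$ and $y_n$ are affine functions of $z_{n-1}$, so they converge as well. Hence every initial state converges to the same limit, which gives uniqueness. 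Solving $z^*=c_3+\lambda z^*$ yields \eqref{eq:exact-fp}. Substituting into $x^*=\sin\beta\cos\kappa-\cos\beta\sin\kappa\,z^*$ and simplifying with $\sin^2\beta+\cos^2\beta=1$ gives
\[
x^*=\cos\kappa\,\frac{\sin\beta-\sin\kappa}{1-\sin\beta\sin\kappa}.
\]

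\textbf{Coherence.} $\coh(\rho^*)=|x^*|$. Since $\cos\kappa>0$ on the stated range, this vanishes iff $\sin\beta=\sin\kappa$, i.e.\ iff $\beta=\kappa$ on $[0,\pi/2)$.

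\textbf{Informativeness.} I would compute the $FL$ state before the swap. With message populations $p_0,p_1$ (the "event weights", here fixed by $z^*$), $U_W$ produces a classical mixture on $F$: $p_0|1\rangle\langle1|+p_1|0\rangle\langle0|$, plus off-diagonal terms. Those off-diagonal terms are killed once $M$ is traced out, because $M$ acts as a which-path marker. The weak coupling then gives
\[
\rho_{FL}=\sum_f p_f'\,|f\rangle\langle f|\otimes|\chi_f\rangle\langle\chi_f|,
\]
with $|\langle\chi_0|\chi_1\rangle|=\cos\kappa$. This is a classical-quantum state, so $I(F\!:\!L)=S\bigl(\sum_f p_f'\chi_f\bigr)$, the Holevo quantity of a pure-state ensemble. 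This is strictly positive whenever both weights are nonzero and the states are non-identical, i.e.\ whenever $\kappa>0$.

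The main obstacle I expect is bookkeeping in the informativeness step. One has to confirm that tracing $M$ genuinely dephases $F$ in the event basis. In the full model, $M$ is still entangled with $F$ before the swap, so the reduced $FL$ state is indeed block diagonal. One must also make sure that "nondegenerate event weights" corresponds to $|z^*|<1$. That condition follows from $\cos\beta\cos\kappa<1-\sin\beta\sin\kappa$, which is equivalent to $\cos(\beta-\kappa)<1$ and so holds except at $\beta=\kappa$. On the diagonal $\beta=\kappa$ it fails, and the hypothesis is then needed explicitly.
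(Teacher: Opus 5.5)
Your proposal is correct and follows essentially the same route as the paper: read the spectrum $\{0,0,\sin\beta\sin\kappa\}$ off $A$, solve $(\id-A)\bm{v}^{*}=\bm{c}$ for the fixed point, identify the coherence with $|x^{*}|$, and deduce $I(F\!:\!L)>0$ from two distinct conditional probe states (overlap $\cos\kappa$) carrying positive weights. Your additions are correct refinements of the paper's terser argument: the explicit geometric convergence of $z_n$ in place of $A^n\to 0$, the classical--quantum form of $\rho_{FL}$ giving $I(F\!:\!L)=S(\rho_L)$, and the observation that nondegenerate weights are equivalent to $\beta\neq\kappa$ on this range.
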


\begin{proof}
The spectrum of $A$ in \eqref{eq:exact-Ac} is $\{0,0,\sin\beta\sin\kappa\}$, which lies strictly inside the unit disk on the stated open square, so $A^n\to 0$ and $\id-A$ is invertible. Solving $(\id-A)\bm{v}^{*}=\bm{c}$ yields \eqref{eq:exact-fp}; global attractivity follows from $A^n\to 0$. The stationary coherence is $|x^{*}|$, which vanishes only on $\beta=\kappa$. The two conditional probe states have overlap $\cos\kappa$, so they are distinct for $\kappa>0$, and positive event weights imply $I(F\!:\!L)>0$.
\end{proof}

Figure~\ref{fig:exact} displays the exact stability gap, stationary coherence, and probe information across the family.

\begin{figure*}
\includegraphics[width=\textwidth]{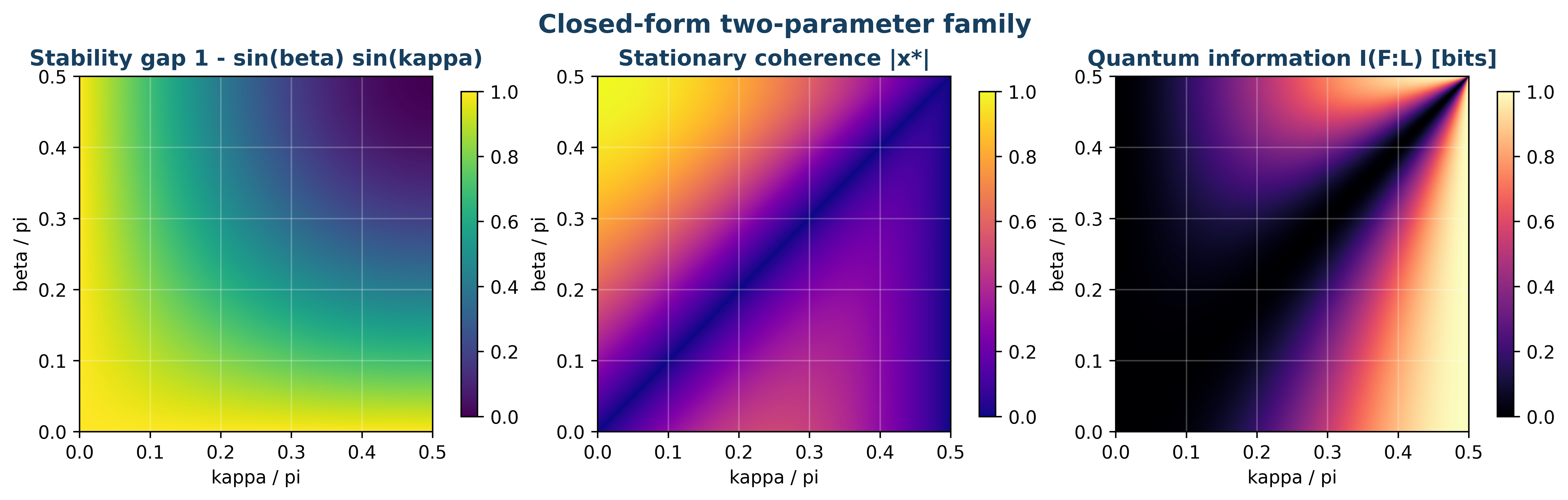}
\caption{\label{fig:exact}Exact two-parameter family of Sec.~\ref{sec:exact} ($\theta=0$, $\phi=\pi/2$). Left: stability gap $1-\sin\beta\sin\kappa$. Center: stationary coherence $|x^{*}|$ from Eq.~\eqref{eq:exact-fp}, vanishing on the diagonal $\beta=\kappa$. Right: quantum mutual information $I(F\!:\!L)$ of the pre-feedback probe--event state at the stationary message; for this full-swap family the orthogonal future records give $I(F\!:\!L)=S(\rho_L)=S(\rho_M^{*})$, an equality that uses the orthogonality of the future records and the resulting purity of the conditional probe states at $\theta=0$, and that is special to this full-swap slice. It too vanishes on $\beta=\kappa$, where $\lVert\bm{v}^{*}\rVert_2=1$ and the stationary message is pure.}
\end{figure*}

A scope qualification is essential. Because $\theta=0$ stores \emph{orthogonal} which-message information in $F$, the reduced channel of this section is entanglement breaking. Proposition~\ref{prop:exact} establishes coherent stationary \emph{output}, not preservation of input entanglement---a distinction the next section makes precise.

\section{Entanglement breaking and the partial swap}
\label{sec:noneb}

\subsection{Perfect future records force entanglement breaking}

\begin{lemma}[No-go at perfect distinguishability]
\label{lem:nogo}
For $\theta=0$ and arbitrary $\kappa$, $\phi$, and $\beta$, the discarded future register carries an orthogonal record of the message basis. The reduced message channel factors through complete computational-basis dephasing and is therefore entanglement breaking. The partial swap alone cannot remove the which-message record.
\end{lemma}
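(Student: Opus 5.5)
The plan is to show directly that at $\theta=0$ the channel $\Phi_{0,\kappa,\phi,\beta}$ is a measure-and-prepare channel in the computational basis. Entanglement breaking then follows from the Horodecki--Shor--Ruskai characterization \cite{HSR2003}: a measure-and-prepare channel has a separable Choi state.

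\textbf{Step 1: the world unitary writes an orthogonal record.} Evaluate $U_W(0)$ on the initial auxiliary state. Since $R_y(\pi)|0\rangle=|1\rangle$ and $R_y(0)|0\rangle=|0\rangle$, Eq.~\eqref{eq:UW} gives
\begin{equation}
U_W(0)\,|m\rangle_M|0\rangle_F|0\rangle_L=|m\rangle_M|\bar m\rangle_F|0\rangle_L ,
\end{equation}
where $\bar m=1-m$. This is a CNOT-type anti-copy, so $\langle\psi_0|\psi_1\rangle=\sin 0=0$. Applied to a general $\rho_M$, the joint state after $U_W$ is $\sum_{m,m'}\rho_{mm'}|m\rangle\langle m'|_M\otimes|\bar m\rangle\langle\bar m'|_F\otimes|0\rangle\langle0|_L$.

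\textbf{Step 2: everything after $U_W$ is block diagonal in $F$.} The remaining unitary is $V=R_y^M(\beta)\,U_{\mathrm{fb}}(\phi)\,U_{\mathrm{weak}}(\kappa)$. The factors $R_y^M(\beta)$ and $U_{\mathrm{fb}}(\phi)$ act only on $ML$, and $U_{\mathrm{weak}}(\kappa)=\exp[-i\tfrac\kappa2 Z_F\otimes Y_L]$ is diagonal in the computational basis of $F$. Hence
\begin{equation}
V=\sum_{f}|f\rangle\langle f|_F\otimes V_f ,\qquad V_f=R_y^M(\beta)\,U_{\mathrm{fb}}(\phi)\,e^{-i(-1)^f\kappa Y_L/2},
\end{equation}
for all $\kappa,\phi,\beta$. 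Taking the partial trace over $F$ in this basis, $\langle f|V X V^\dagger|f\rangle_F=V_f\langle f|X|f\rangle_F V_f^\dagger$. Applied to the state from Step~1, only the diagonal term $m=m'=\bar f$ survives. Therefore
\begin{equation}
\Phi(\rho_M)=\sum_{m}\langle m|\rho_M|m\rangle\,\sigma_m,\qquad \sigma_m=\tr_L\!\big[V_{\bar m}(|m\rangle\langle m|_M\otimes|0\rangle\langle0|_L)V_{\bar m}^\dagger\big].
\end{equation}
This is computational-basis dephasing followed by a classical-to-quantum preparation.

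\textbf{Step 3: conclude.} Each $\sigma_m$ is a valid state, so $\Phi$ is entanglement breaking. Concretely, its Choi state is $\tfrac12\sum_m|m\rangle\langle m|\otimes\sigma_m$, which is separable and hence PPT. The statement that the partial swap cannot remove the record is exactly the block structure of Step~2: $U_{\mathrm{fb}}$ acts only on $ML$ and never touches $F$. Any coherence between the branches $m\neq m'$ is therefore killed by the orthogonality $\langle\bar m'|\bar m\rangle_F=0$ when $F$ is discarded, however the partial swap redistributes $M$ and $L$.

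The only step needing care is Step~2, namely checking that $U_{\mathrm{weak}}$ commutes with the $F$ projectors. This holds because its generator is $Z_F\otimes Y_L$. Had the probe coupled to $X_F$ instead, the argument would fail, and the lemma would need to be restated. The global sign and phase conventions of $R_y(\pi)$ are irrelevant, since they cancel in $\rho\mapsto U\rho U^\dagger$.
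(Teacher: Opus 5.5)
Your proof is correct and follows the same route as the paper. It uses the anti-copy $|m\rangle_M|\bar m\rangle_F$ at $\theta=0$, the fact that $U_{\mathrm{weak}}$ is diagonal in the $Z_F$ basis while $U_{\mathrm{fb}}$ and $R_y^M(\beta)$ act only on $ML$, and the vanishing of the input coherences under $\tr_F$, which yields a measure-and-prepare (hence entanglement-breaking) channel; your block decomposition $V=\sum_f|f\rangle\langle f|_F\otimes V_f$ and explicit Choi state $\tfrac12\sum_m|m\rangle\langle m|\otimes\sigma_m$ make the paper's argument fully explicit.
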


\begin{proof}
At $\theta=0$, $U_W$ maps the two message basis alternatives into orthogonal states of $F$, namely $|\psi_0\rangle=|1\rangle$ and $|\psi_1\rangle=|0\rangle$. None of the subsequent operations \eqref{eq:Uweak}--\eqref{eq:Ufb} acts to erase that record before $F$ is discarded, since $U_{\mathrm{weak}}$ is diagonal in the $Z_F$ basis and $U_{\mathrm{fb}}$ acts only on $M$ and $L$. The erasure argument holds for arbitrary $\kappa,\phi,\beta$: whatever correlations the probe coupling creates between $F$ and $L$, the input coherences $|0\rangle\langle1|$ and $|1\rangle\langle0|$ propagate to operators carrying the factor $|1\rangle\langle 0|_F$ or $|0\rangle\langle 1|_F$, whose partial trace over $F$ vanishes identically, for any feedback angle $\phi$ and controller rotation $\beta$. The reduced output therefore depends on the input only through its computational-basis diagonal: the channel is a measure-and-prepare map, hence entanglement breaking \cite{HSR2003}.
\end{proof}

Escaping the no-go requires releasing perfect anti-copy. For $\theta>0$ the future records overlap by $\sin 2\theta$, allowing part of the input quantum correlation to survive.

\subsection{A constructive open non-entanglement-breaking set}
\label{sec:openset}

The reference point is identified by a reproducible numerical search (script \texttt{partial\_swap\_extension.py} in the accompanying repository). We sampled $1200$ parameter tuples uniformly from the box $\theta/\pi\in[0.03,0.24]$, $\kappa/\pi\in[0.03,0.45]$, $\phi/\pi\in[0.03,0.47]$, $\beta/\pi\in[0.02,0.45]$ with a seeded generator, retained the $953$ tuples with stability gap $>0.03$, coherence $>0.08$, $I(F\!:\!L)>0.01$, and Choi negativity $>10^{-4}$, and selected the tuple maximizing the balanced product score $(\text{stability gap})\times(\text{coherence})\times I(F\!:\!L)\times\sqrt{\text{negativity}}$. The selected point is not NPT-optimal---retained tuples reach negativity $0.475$, more than twice the value at the reference point---so the regions constructed below are not claimed to be the largest non-entanglement-breaking ones; they are built around a well-conditioned point at which all four properties hold with comfortable margins. The search selects the reference point
\begin{equation}
\begin{aligned}
(\theta_0,\phi_0)/\pi&=(0.16345853,\;0.20061939),\\
(\kappa_0,\beta_0)/\pi&=(0.43230980,\;0.23903823).
\end{aligned}
\label{eq:refpoint}
\end{equation}

\begin{proposition}[Constructive open-set existence]
\label{prop:openset}
Within the four-parameter family $\Phi_{\theta,\kappa,\phi,\beta}$ there exists a nonempty open set on which the induced message channel has a unique globally attractive coherent fixed point, the pre-feedback probe has positive quantum mutual information with $F$, and the channel is not entanglement breaking.
\end{proposition}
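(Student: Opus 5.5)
The plan is a continuity (open-condition) argument anchored at the reference point \eqref{eq:refpoint}. The map $p=(\theta,\kappa,\phi,\beta)\mapsto U_{\mathrm{round}}(p)$ is real-analytic, because it is a product of exponentials of fixed Hermitian generators. Since $\Phi_p(X)=\tr_{FL}[U_{\mathrm{round}}(X\otimes|00\rangle\langle00|)U_{\mathrm{round}}^\dagger]$, the Bloch data $A(p)$, $\bm{c}(p)$ of Eq.~\eqref{eq:affine} and the Choi state $J(\Phi_p)$ are also real-analytic in $p$. Each of the four required properties will be expressed as a strict inequality on a continuous function of $p$. Each will then be verified at $p_0$, and the intersection of the resulting open preimages is an open set containing $p_0$, hence nonempty.

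\textbf{Stability.} I will use the contraction margin $1-\lVert A(p)\rVert_2>0$. The spectral norm is continuous in $p$, so this set is open. On it, $\bm{v}\mapsto A\bm{v}+\bm{c}$ is a strict contraction of the closed Bloch ball, which the channel maps into itself. Banach's theorem then gives a unique fixed point $\bm{v}^*(p)=(\id-A)^{-1}\bm{c}$ that is globally attractive. Because $\id-A$ is invertible on this set, $\bm{v}^*(p)$ is continuous, indeed analytic.

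\textbf{Coherence.} The condition is $\sqrt{x^{*2}+y^{*2}}>0$, which is a continuous function of $\bm{v}^*$.

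\textbf{Informativeness.} $I(F\!:\!L)$ is evaluated on the pre-swap state $(U_{\mathrm{weak}}U_W)(\rho_M^*\otimes|00\rangle\langle00|)(\cdot)^\dagger$. This state depends continuously on $p$ through $\rho_M^*$ and the unitaries, and von Neumann entropy is continuous (Fannes), so $I(F\!:\!L)$ is continuous and the condition $I(F\!:\!L)>0$ is open.

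\textbf{Non-entanglement breaking.} The condition is $\lambda_{\min}(J(\Phi_p)^{T_R})<0$. Eigenvalues are continuous in the matrix entries, so this set is open. For qubit channels NPT is exactly the negation of entanglement breaking, by the Peres--Horodecki criterion cited in Sec.~\ref{sec:criteria}, so NPT implies the channel is not entanglement breaking.

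It remains to check the four strict inequalities at $p_0$. These are finite computations with $8\times8$ unitaries and $4\times4$ matrices: contraction margin $0.287$, positive coherence, positive $I(F\!:\!L)$, and Choi negativity bounded away from zero, as recorded by the search in Sec.~\ref{sec:openset}. For this to count as a proof rather than a numerical observation, I would evaluate each quantity with outward-rounded interval or ball arithmetic and certify that every enclosure excludes $0$ on the correct side.

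This certification is the main obstacle, in particular for $\lVert A\rVert_2$ and the minimal partial-transpose eigenvalue, whose rigorous enclosure requires Weyl-type bounds on computed spectra. I would handle it by computing floating-point eigenpairs and bounding the residual norms, $\lvert\lambda-\tilde\lambda\rvert\le\lVert(H-\tilde\lambda)\tilde v\rVert$ for Hermitian $H$. The margins are large relative to rounding error, so this should succeed comfortably.

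Optionally, an explicit radius could be obtained via Lipschitz bounds: each generator has norm at most $1$, so $\lVert\partial_{p_i}U_{\mathrm{round}}\rVert\le1$. Existence, however, needs only continuity.
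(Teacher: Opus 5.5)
Your proposal is correct and follows essentially the same route as the paper. Both arguments rest on two facts: the contraction margin, coherence, $I(F\!:\!L)$, and Choi negativity are strictly positive at the reference point \eqref{eq:refpoint}, and these quantities vary continuously wherever $\id-A$ is invertible, so the finite intersection of the resulting open neighborhoods is open and nonempty. The rigorous enclosure of the reference values, which you flag as the main obstacle, is exactly what the paper supplies through the ball-arithmetic certificate behind Table~\ref{tab:enclosures}.
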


\begin{proof}
At the reference point \eqref{eq:refpoint}, direct numerical evaluation of the reconstructed channel gives strictly positive values for all four properties: $1-\lVert A_0\rVert_2\approx 0.287$, $\coh(\rho_M^{*})\approx 0.571$, $I_0(F\!:\!L)\approx 1.569$ bits, and $-\lambda_{\min}[J^{\pt}]\approx 0.205$ (certified rigorously in Table~\ref{tab:enclosures} below). The channel matrix, fixed point, entropies, and Choi eigenvalues depend continuously on the angles wherever $\id-A$ is nonsingular, so each strict inequality persists on an open neighborhood, and the intersection of finitely many open neighborhoods is open and nonempty. The argument uses only continuity and the existence of one point with positive margins; it does not presume the certified enclosures themselves.
\end{proof}

Proposition~\ref{prop:openset} is a continuity statement with an unspecified neighborhood; Propositions~\ref{prop:npt} and~\ref{prop:combined} replace it with explicit inequalities. Figure~\ref{fig:swap} shows the four numerically evaluated figures of merit on the $(\theta,\phi)$ slice through the reference point.

\begin{figure*}
\includegraphics[width=0.86\textwidth]{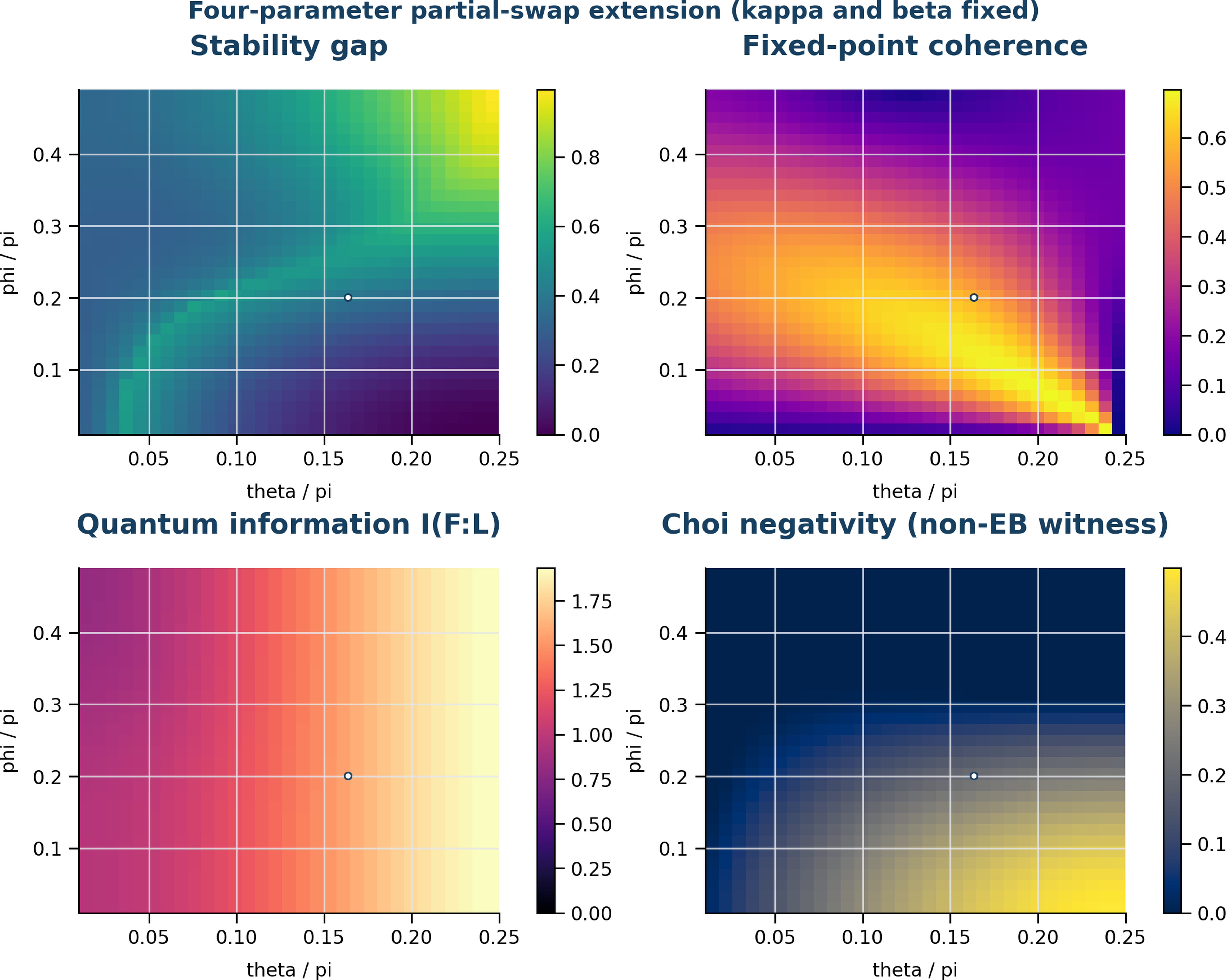}
\caption{\label{fig:swap}Numerical maps of the partial-swap family on the $(\theta,\phi)$ slice with $\kappa=\kappa_0$, $\beta=\beta_0$: stability gap (spectral-radius convention, Sec.~\ref{sec:criteria}), fixed-point coherence, pre-feedback information $I(F\!:\!L)$, and Choi negativity. A positive-negativity region (non-entanglement-breaking witness) overlaps with stability, coherence, and information; the white marker is the reference point \eqref{eq:refpoint}.}
\end{figure*}

\section{An explicit NPT neighborhood}
\label{sec:npt}

Fix $\kappa_0$ and $\beta_0$ and vary $\theta$ and $\phi$. Let $H(\theta,\phi)=J(\theta,\phi)^{\pt}$ denote the partial transpose (on the reference system) of the normalized Choi state. At the reference point, outward-rounded working enclosures for the four eigenvalues of $H_0$ are
\begin{equation}
\begin{aligned}
\lambda(H_0)\subset&\;[-0.2054,-0.2051]\cup[0.3123,0.3127]\\
&\cup[0.4199,0.4203]\cup[0.4725,0.4729],
\end{aligned}
\label{eq:H0-spectrum}
\end{equation}
so the safe lower NPT margin is $\mu_-=0.2051\le-\lambda_{\min}(H_0)$. Define the perturbation scale
\begin{equation}
s=\sin\!\big(|\Delta\theta|/2\big)+\sin\!\big(|\Delta\phi|/2\big),
\label{eq:s}
\end{equation}
with $\Delta\theta=\theta-\theta_0$, $\Delta\phi=\phi-\phi_0$. Unitary chord bounds, trace-norm contractivity, Frobenius-norm invariance under partial transposition, and Weyl's perturbation inequality \cite{Bhatia1997} give (Appendix~\ref{app:bounds})
\begin{equation}
\lambda_{\min}[H(\theta,\phi)]\;\le\;-\mu_-+4s.
\label{eq:npt-bound}
\end{equation}

\begin{proposition}[Explicit NPT neighborhood]
\label{prop:npt}
For fixed $\kappa_0/\pi=0.43230980$ and $\beta_0/\pi=0.23903823$, every $(\theta,\phi)$ satisfying $4s<0.2051$ has an NPT normalized Choi state, and hence a non-entanglement-breaking message channel. For a symmetric square $|\Delta\theta|,|\Delta\phi|<\delta$, a sufficient half-width is $\delta/\pi<0.01632312$.
\end{proposition}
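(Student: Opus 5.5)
The plan is to take the perturbation inequality \eqref{eq:npt-bound} as given, since it is derived in Appendix~\ref{app:bounds}, and combine it with the certified spectral enclosure \eqref{eq:H0-spectrum} at the reference point. There are three steps: (i) turn the hypothesis $4s<0.2051$ into strict negativity of $\lambda_{\min}[H(\theta,\phi)]$; (ii) convert NPT of the normalized Choi state into non-entanglement-breaking; (iii) translate the condition on $s$ into an explicit half-width for a symmetric square.

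For step (i), Eq.~\eqref{eq:H0-spectrum} gives $-\lambda_{\min}(H_0)\ge\mu_-=0.2051$. The bound \eqref{eq:npt-bound} then gives $\lambda_{\min}[H(\theta,\phi)]\le-\mu_-+4s<0$ whenever $4s<0.2051$. I would briefly record the chain behind \eqref{eq:npt-bound}, so that it is clear where the constant $4$ comes from:
\begin{itemize}
\item Weyl's inequality gives $\lambda_{\min}(H)\le\lambda_{\min}(H_0)+\lVert H-H_0\rVert_{\mathrm{op}}$.
\item The operator norm is bounded by the Frobenius norm, and partial transposition preserves the Frobenius norm, so $\lVert H-H_0\rVert_{\mathrm{op}}\le\lVert J-J_0\rVert_2$.
\item The Frobenius norm is bounded by the trace norm, $\lVert J-J_0\rVert_2\le\lVert J-J_0\rVert_1$.
\item Trace-norm contractivity of the partial trace over $FL$ gives $\lVert J-J_0\rVert_1\le 2\lVert U_{\mathrm{round}}-U_{\mathrm{round},0}\rVert$.
\item A telescoping estimate over the two $\theta$- and $\phi$-dependent factors, with chord bounds $\lVert e^{-i\alpha G}-e^{-i\alpha_0 G}\rVert=2\sin(|\alpha-\alpha_0|\lVert G\rVert/2)$ for involutive generators, yields each $2\sin(|\Delta|/2)$ term.
\end{itemize}
Since the statement lets me assume the appendix, I would only cite this chain.

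For step (ii), the qubit Choi state $J$ lives on $2\otimes 2$. By Peres--Horodecki, separability is equivalent to PPT there, so a negative eigenvalue of $J^{\pt}$ means $J$ is entangled. By the Horodecki--Shor--Ruskai characterization \cite{HSR2003}, $\Phi$ is then not entanglement breaking. Because $J$ is normalized, the sign of $\lambda_{\min}$ is unaffected by normalization, so this step is immediate.

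For step (iii), on the square $|\Delta\theta|,|\Delta\phi|<\delta$ with $\delta<\pi$, monotonicity of $\sin$ on $[0,\pi/2]$ gives $s<2\sin(\delta/2)$. It therefore suffices that $8\sin(\delta/2)\le 0.2051$, that is, $\delta\le 2\arcsin(0.2051/8)$. Numerically $0.2051/8=0.0256375$, and $\arcsin x\approx x+x^3/6$ gives $\delta\approx0.0512806$, so $\delta/\pi\approx0.016323$. I would make this rigorous by bounding $\arcsin x\ge x$ from below. Even $\delta=2\cdot0.0256375=0.051275$ gives $\delta/\pi\approx0.016321$, which is slightly below the stated figure. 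To reach the constant $0.01632312$ itself, I would use the cubic term, $\arcsin x\ge x+x^3/6$, with outward rounding, and check that $0.01632312\,\pi<2\arcsin(0.0256375)$.

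The main obstacle is not the logic, which is short, but the certified constants. The margin $\mu_-$ must be a genuine lower bound, which is supplied by the enclosure \eqref{eq:H0-spectrum}. The last few digits of the half-width must be verified with directed rounding, since the stated value sits within about $10^{-8}$ of the true threshold. I would handle this with the same interval arithmetic used for the enclosures.
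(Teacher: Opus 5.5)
Your proposal is correct and follows the same route as the paper. The enclosure \eqref{eq:H0-spectrum} supplies $\mu_-=0.2051$, and the appendix chain (chord bounds, trace-norm contractivity, Frobenius invariance under partial transposition, Weyl) gives $\lambda_{\min}[H]\le-\mu_-+4s$; the half-width then follows from $8\sin(\delta/2)\le 0.2051$. Your observation that the linear bound $\arcsin x\ge x$ falls just short of $0.01632312$ while $\arcsin x\ge x+x^3/6$ clears it is a correct, explicit treatment of the final numerical step, which the paper leaves to its machine-verified certificate.
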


The enclosure is conservative but independent of grid interpolation; Fig.~\ref{fig:npt} compares the analytic certificate with the directly evaluated zero-negativity boundary.

\begin{figure*}
\includegraphics[width=0.8\textwidth]{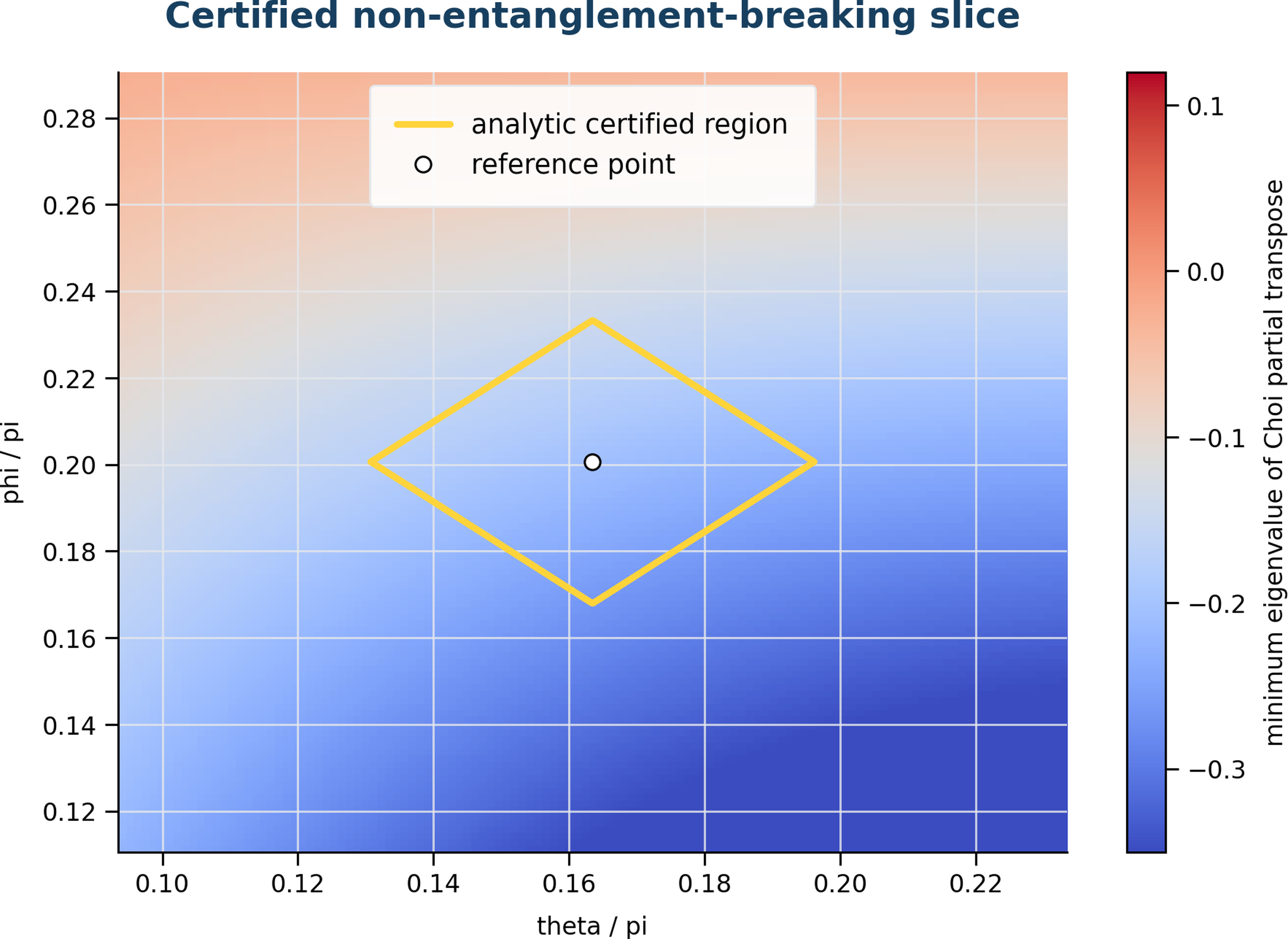}
\caption{\label{fig:npt}Certified Choi-NPT region in the $(\theta,\phi)$ slice. The background is the directly evaluated minimum eigenvalue $\lambda_{\min}[H(\theta,\phi)]$ on a diverging color scale, changing sign where the color passes from blue (NPT) through white to red (PPT). The yellow diamond is the analytic certificate $4s<\mu_-$ of Proposition~\ref{prop:npt}; it lies strictly inside the true NPT region, as it must.}
\end{figure*}

\section{A combined interval-enclosed certificate}
\label{sec:combined}

\subsection{Outward-rounded working enclosures}

The reference quantities entering the combined certificate were recomputed from the explicit $8\times 8$ unitary matrices and enlarged to the outward-rounded working intervals of Table~\ref{tab:enclosures}. For conservative inequalities we use the unfavorable endpoint of each interval: upper bounds for norms, lower bounds for positive margins. The directed-rounding protocol of Appendix~\ref{app:protocol} has been executed with rigorous ball arithmetic; it certifies that every interval in Table~\ref{tab:enclosures} contains the corresponding exact quantity, with certified enclosure radii below $10^{-52}$.

\begin{table}[b]
\caption{\label{tab:enclosures}Outward-rounded working enclosures at the reference point \eqref{eq:refpoint}. $A_0$ and $\bm{v}_0^{*}$ are the Bloch matrix and fixed point, $C_0$ the stationary coherence, $I_0$ the pre-feedback mutual information at the stationary message, and $\mu_{\mathrm{NPT}}=-\lambda_{\min}(H_0)$.}
\begin{ruledtabular}
\begin{tabular}{lc}
Quantity & Enclosure used in proofs \\
\colrule
$\lVert A_0\rVert_2$ & $[0.71330,\ 0.71333]$ \\
$\lVert(\id-A_0)^{-1}\rVert_2$ & $[2.3872,\ 2.3874]$ \\
$\lVert\bm{v}_0^{*}\rVert_2$ & $[0.5850,\ 0.5852]$ \\
$C_0$ & $[0.5710,\ 0.5712]$ \\
$I_0(F\!:\!L)$ [bits] & $[1.5685,\ 1.5689]$ \\
$\mu_{\mathrm{NPT}}$ & $[0.2051,\ 0.2054]$ \\
\end{tabular}
\end{ruledtabular}
\end{table}

\subsection{Perturbation bounds}
\label{sec:bounds}

With $s$ as in \eqref{eq:s}, the following conservative bounds hold (derivations in Appendix~\ref{app:bounds}):
\begin{align}
\lVert A-A_0\rVert_2&\le 12s, &
\lVert\bm{c}-\bm{c}_0\rVert_2&\le 4\sqrt{3}\,s,
\label{eq:Ac-bounds}\\
\lambda_{\min}(H)&\le-\mu_-+4s, &
\lVert\bm{v}^{*}-\bm{v}_0^{*}\rVert_2&\le D(s),
\label{eq:v-bound}
\end{align}
where the fixed-point drift obeys the resolvent estimate
\begin{equation}
D(s)=\frac{\lVert(\id-A_0)^{-1}\rVert_2\big[4\sqrt{3}\,s+12s\,\lVert\bm{v}_0^{*}\rVert_2\big]}{1-12s\,\lVert(\id-A_0)^{-1}\rVert_2}.
\label{eq:resolvent}
\end{equation}
Stability on a region follows from $\lVert A_0\rVert_2+12s<1$, and coherence from $\coh(\rho_M^{*})\ge C_0-D(s)$.

The informativeness certificate requires care, because $I(F\!:\!L)$ is evaluated at the stationary message of the \emph{perturbed} channel. The pre-feedback state $\rho_{FL}$ therefore changes for two reasons: the world unitary $U_W(\theta)$ changes, and the stationary input $\rho_M^{*}$ itself drifts. Both contributions must appear in the trace-distance budget:
\begin{equation}
T\big(\rho_{FL},\rho_{FL}^{0}\big)\;\le\;\underbrace{2\sin(|\Delta\theta|/2)}_{\text{world unitary}}+\underbrace{\tfrac12\lVert\bm{v}^{*}-\bm{v}_0^{*}\rVert_2}_{\text{stationary input}}\;\le\;\tau(s),
\label{eq:T-bound}
\end{equation}
with $\tau(s)=2s+D(s)/2$. Combining the Fannes--Audenaert continuity bound \cite{Audenaert2007} for the joint state ($d=4$) with the contractivity of the partial trace for the two marginals ($d=2$) gives, for $\tau\le 1/2$,
\begin{equation}
|I(F\!:\!L)-I_0|\;\le\;3\,h_2(\tau)+\tau\log_2 3.
\label{eq:info-bound}
\end{equation}

\begin{remark}
\label{rem:drift}
The input-drift term $D(s)/2$ dominates the information budget: omitting it, i.e., using $T\le 2s$ alone, would certify informativeness up to $s<0.05013$, but only for a \emph{parameter-independent} diagnostic input. For the loop's own stationary message---the operationally relevant object---the drift term is required, and it is what makes informativeness, rather than coherence, the binding constraint below.
\end{remark}

\subsection{The certified square}
\label{sec:square}

Solving the four sufficient inequalities with the unfavorable endpoints of Table~\ref{tab:enclosures} yields the thresholds
\begin{equation}
\begin{aligned}
\text{NPT:}\quad & s<0.05127500,\\
\text{stability:}\quad & s<0.02388916,\\
\text{coherence:}\quad & s<0.01149723,\\
\text{information:}\quad & s<0.00471753,
\end{aligned}
\label{eq:thresholds}
\end{equation}
so information is limiting. We therefore choose a rounded square strictly inside all four thresholds, of half-width $0.0013\pi$; on it, $s\le 2\sin(0.0013\pi/2)<0.0040841$.

\begin{proposition}[Combined interval-enclosed phase]
\label{prop:combined}
Fix $\kappa_0/\pi=0.43230980$ and $\beta_0/\pi=0.23903823$. Every channel in the square
\begin{equation}
\Big|\tfrac{\theta}{\pi}-0.16345853\Big|<0.0013,\qquad
\Big|\tfrac{\phi}{\pi}-0.20061939\Big|<0.0013
\label{eq:square}
\end{equation}
is a strict contraction on the Bloch ball, has a unique globally attractive fixed point with nonzero computational-basis coherence, has positive pre-feedback quantum mutual information $I(F\!:\!L)$ evaluated at its own stationary message, and is not entanglement breaking. At the square boundary the working interval bounds retain the explicit margins
\begin{equation}
\begin{aligned}
1-\lVert A\rVert_2&\ge 0.23766, &
\coh(\rho_M^{*})&\ge 0.41695,\\
I(F\!:\!L)&\ge 0.17283\ \text{bits}, &
-\lambda_{\min}[J^{\pt}]&\ge 0.18876.
\end{aligned}
\label{eq:margins}
\end{equation}
\end{proposition}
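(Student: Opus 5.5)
The proof combines the perturbation bounds \eqref{eq:Ac-bounds}--\eqref{eq:info-bound} with the unfavorable endpoints of Table~\ref{tab:enclosures}, evaluated at the single worst-case value of $s$ on the square.

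\emph{Step 1: bound $s$ on the square.} On \eqref{eq:square} we have $|\Delta\theta|,|\Delta\phi|<0.0013\pi$. Since $\sin(x/2)$ is increasing on $[0,\pi]$, it follows that $s<2\sin(0.0013\pi/2)<0.0040841=:s_{\max}$. Every bound used below is monotone increasing in $s$, provided the denominator of \eqref{eq:resolvent} stays positive, which we check next. So it suffices to evaluate each bound at $s_{\max}$. This also shows $s_{\max}$ lies below all four thresholds in \eqref{eq:thresholds}.

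\emph{Step 2: stability and the fixed point.} By \eqref{eq:Ac-bounds} and the triangle inequality,
\[
\lVert A\rVert_2\le 0.71333+12s_{\max}<0.76234,
\]
so $1-\lVert A\rVert_2\ge 0.23766$. The affine map $\bm v\mapsto A\bm v+\bm c$ is then a strict contraction in the Euclidean norm. It maps the Bloch ball into itself because $\Phi$ is CPTP. Banach's theorem therefore gives a unique fixed point that attracts every initial state.

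For the drift estimate, $12s_{\max}\cdot 2.3874<1$, so \eqref{eq:resolvent} is well defined. It follows from the Neumann-series identity
\[
(\id-A)^{-1}=(\id-A_0)^{-1}\bigl[\id-(A-A_0)(\id-A_0)^{-1}\bigr]^{-1}
\]
applied to $\bm v^*-\bm v^*_0=(\id-A)^{-1}\bigl[(\bm c-\bm c_0)+(A-A_0)\bm v_0^*\bigr]$; we take this as given from Appendix~\ref{app:bounds}. Substituting the upper endpoints $2.3874$ and $0.5852$ gives $D(s_{\max})\approx 0.154$.

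\emph{Step 3: coherence and non-entanglement-breaking.} The quantity $\coh=\sqrt{x^2+y^2}$ is 1-Lipschitz in the Bloch vector. Hence
\[
\coh(\rho_M^*)\ge 0.5710-D(s_{\max})\ge 0.41695>0.
\]
For the NPT margin, \eqref{eq:npt-bound} (Proposition~\ref{prop:npt}) gives
\[
-\lambda_{\min}[J^{\pt}]\ge 0.2051-4s_{\max}\ge 0.18876.
\]
For qubit channels a negative partial transpose rules out separability of the Choi state, so by \cite{HSR2003} the channel is not entanglement breaking.

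\emph{Step 4: informativeness.} This is the binding step, and I expect it to be the main obstacle. The pre-feedback state $\rho_{FL}$ depends on $\theta$ through $U_W$ and on the drifting stationary input. Two ingredients give \eqref{eq:T-bound}:
\begin{itemize}
\item a unitary chord bound $\lVert U_W(\theta)-U_W(\theta_0)\rVert\le 2\sin(|\Delta\theta|/2)$, where each controlled $R_y$ block contributes half-angle differences;
\item trace-distance contractivity, which turns the input difference $\tfrac12\lVert\bm v^*-\bm v^*_0\rVert_2$ into a bound on the output.
\end{itemize}
We must also ensure $\tau(s_{\max})=2s_{\max}+D(s_{\max})/2\approx 0.085$ satisfies $\tau\le1/2$, so that $h_2$ is increasing and Fannes--Audenaert applies. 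Writing $I=S(F)+S(L)-S(FL)$ and summing the three continuity bounds (dimension 2 twice, dimension 4 once) gives \eqref{eq:info-bound}. The final inequality
\[
I\ge 1.5685-3h_2(\tau)-\tau\log_2 3\ge 0.17283
\]
is very sensitive to $\tau$, because $h_2$ is steep near $0$. I would therefore evaluate it with outward-rounded arithmetic, as in Appendix~\ref{app:protocol}, rather than trusting floating point.

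All four margins are positive on the closed square, which proves the proposition.
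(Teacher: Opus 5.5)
Your proposal is correct and is essentially the paper's proof. You bound $s\le 2\sin(0.0013\pi/2)$ on the square, insert the unfavorable endpoints of Table~\ref{tab:enclosures} into \eqref{eq:Ac-bounds}--\eqref{eq:info-bound}, use Banach for the unique attractive fixed point, and read off the four margins with $\tau(s_{\max})\approx 0.0852<1/2$. Your added justifications (invariance of the Bloch ball, the $1$-Lipschitz property of $\coh$ in the Bloch vector, monotonicity of $h_2$ on $[0,1/2]$) are what the paper uses implicitly. Your caution about the information step is warranted: the resulting lower bound ($\approx 0.172837$ bits) exceeds the quoted $0.17283$ by less than $10^{-5}$.
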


\begin{proof}
On the square \eqref{eq:square}, $s\le 2\sin(0.0013\pi/2)$. Substituting the unfavorable endpoints of Table~\ref{tab:enclosures} into \eqref{eq:Ac-bounds}--\eqref{eq:info-bound}: $\lVert A\rVert_2\le 0.71333+12s$ gives the first margin and makes $\bm{v}\mapsto A\bm{v}+\bm{c}$ a strict contraction, hence a unique globally attractive fixed point (Banach); $C\ge 0.5710-D(s)$ gives the second; $I\ge 1.5685-[3h_2(\tau(s))+\tau(s)\log_2 3]$ with $\tau(s)=2s+D(s)/2\le 0.08520<1/2$ gives the third; and $-\lambda_{\min}\ge 0.2051-4s$ gives the fourth. All four margins are strictly positive, completing the certificate.
\end{proof}

Figure~\ref{fig:certificates} shows the four analytic thresholds \eqref{eq:thresholds} converted to symmetric half-widths, together with the chosen square \eqref{eq:square}. Direct (nonrigorous) evaluation of the channel on and beyond the certified square indicates that all four properties in fact persist over a much larger region---e.g., at the corners of the square of half-width $0.0034\pi$ the directly computed values are $1-\lVert A\rVert_2\ge 0.278$, $\coh\ge 0.554$, $I\ge 1.548$ bits, and $-\lambda_{\min}\ge 0.195$---so the certified square reflects the conservatism of the generic perturbation bounds, not a physical boundary.

\begin{figure}[!tb]
\includegraphics[width=\columnwidth]{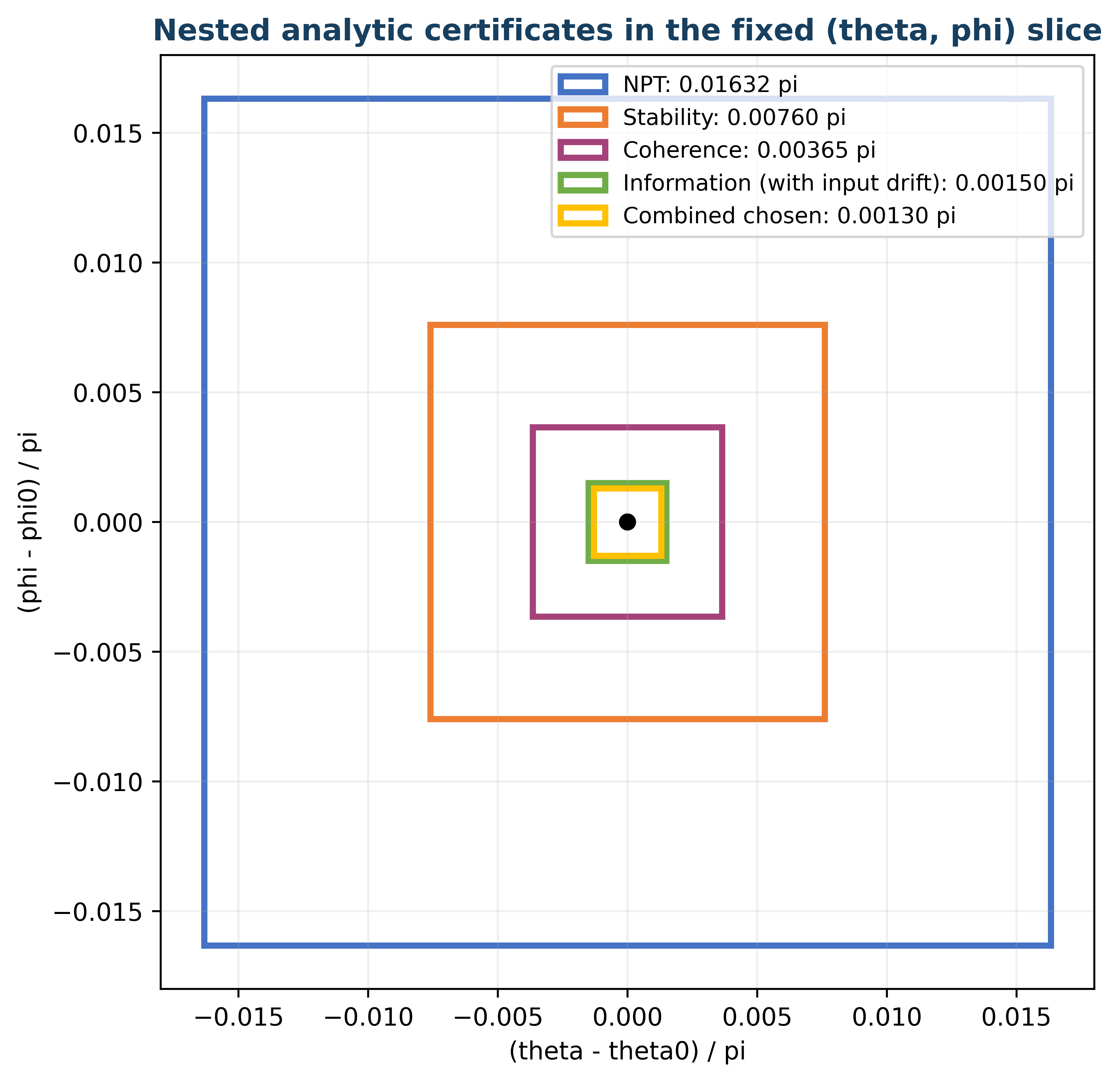}
\caption{\label{fig:certificates}Nested analytic certificates in the fixed $(\theta,\phi)$ slice, as symmetric half-widths about the reference point: NPT ($0.01632\pi$), stability ($0.00760\pi$), coherence ($0.00365\pi$), information with input drift ($0.00150\pi$), and the chosen combined square ($0.0013\pi$) of Proposition~\ref{prop:combined}.}
\end{figure}

\begin{remark}[Scale of the certified square]
\label{rem:scale}
The square \eqref{eq:square} covers a fraction $\approx 7\times10^{-6}$ of the $(\theta,\phi)$ plane: its half-width $0.0013\pi$ should be compared with the threshold half-widths of Fig.~\ref{fig:certificates} and with the directly evaluated region, which extends beyond half-width $0.0034\pi$. The conservatism is quantifiable: the cross-check implementation finds that the generic constants \eqref{eq:Ac-bounds} overestimate the actual local variations of $A$, $\bm{c}$, and $J$ by factors of roughly $5$, $4$, and $2.5$ respectively, and the drift-corrected information certificate in fact retains a positive margin ($+0.0013$ bits) already at half-width $0.0015\pi$; the rounder $0.0013\pi$ was chosen for comfortable, clearly displayable margins. Enlarging the certified region by structured perturbation analysis is the most useful technical extension of this work (Sec.~\ref{sec:limitations}).
\end{remark}

\section{Classification and interpretation}
\label{sec:classification}

The constructions above suggest a general taxonomy of future-feedback processes; we record it as a programme, since only the first class is established by a theorem in this work. \emph{Robust informative fixed points}: unique attractive fixed points with positive predictive information and admissible reinsertion---the certified square of Proposition~\ref{prop:combined} is a constructive example. \emph{Stable but opaque fixed points}: convergent loops whose stationary message is independent of the future variable---exemplified by the coherent family of Sec.~\ref{sec:exact} at zero probe coupling $\kappa=0$, which converges to a coherent stationary message yet carries $I(F\!:\!L)=0$. \emph{Informative but unstable processes}: predictive states embedded in cycles or peripheral spectral sectors, as at the deterministic corner of Sec.~\ref{sec:classical}, where the anti-predictor loop forms a perfectly informative two-cycle. \emph{History-selected processes}: multiple stationary sectors selected by temporal memory or initialization---for instance the same two-cycle, whose phase is fixed by the initial bit. \emph{Nonstationary structured feedback}: periodic or quasiperiodic loops with no single stationary message. \emph{Unphysical specifications}: constructions requiring hidden postselection, unavailable causal slots, or nonlinear renormalization, which fall outside the CPTP setting of Eq.~\eqref{eq:message-channel}. Giving each remaining class a defining theorem rather than an example is an open problem.

Propositions~\ref{prop:exact}--\ref{prop:combined} separate three notions often conflated in informal discussions of ``information from the future'': a coherent output can be prepared by an entanglement-breaking channel (Proposition~\ref{prop:exact} with Lemma~\ref{lem:nogo}); non-entanglement-breaking behavior requires survival of reference correlations (Propositions~\ref{prop:openset} and~\ref{prop:npt}); and a useful self-consistent predictor additionally needs stability and information about the designated future event (Proposition~\ref{prop:combined}).

\section{Limitations and open problems}
\label{sec:limitations}

The process-tensor decomposition assumes identifiable intervention slots; if the simulator and the simulated system cannot be operationally separated, the model may not be identifiable. The interval enclosures are deliberately wide; they are confirmed by the machine-verified certificate of Appendix~\ref{app:protocol} and by an independent second-stack audit using a different verified-arithmetic library and different spectral-enclosure algorithms, so implementation errors in the two stacks are uncorrelated. A fully external audit on a third toolchain by an independent party would strengthen the result further. Computational-basis coherence is basis dependent, and a stronger treatment should tie coherence to an operational task or resource theory \cite{Streltsov2017}. The perturbation estimates of Sec.~\ref{sec:bounds} are generic and conservative---empirically, the constants in \eqref{eq:Ac-bounds} overestimate the actual local variations by factors of roughly $2.5$ to $5$---so structured perturbation analysis should enlarge the certified square considerably; the information certificate, whose input-drift term dominates (Remark~\ref{rem:drift}), would benefit most. Nothing here establishes retrocausality in external time. Finally, non-entanglement-breaking behavior does not by itself imply positive quantum capacity, and a complete treatment should compare more deeply with coherent quantum feedback \cite{Lloyd2000,WisemanMilburn2010,Grimsmo2015}, quantum causal models \cite{CostaShrapnel2016,OreshkovCostaBrukner2012}, Deutsch consistency \cite{Deutsch1991}, and process-tensor control \cite{MilzModi2021}.

Two further limitations concern the certified phase itself. The reference point \eqref{eq:refpoint} was selected by a balanced product score, not to maximize the non-entanglement-breaking region, and parameter points with substantially larger Choi negativity are known to exist (Sec.~\ref{sec:openset}); certifying the global extent of the non-entanglement-breaking phase, and complementarily certifying the surrounding PPT (entanglement-breaking) region with the same interval machinery, remain open. Likewise, the certified square \eqref{eq:square} is a small island inside a much larger region in which all four properties are observed directly (Remark~\ref{rem:scale}); closing this gap requires structured, non-generic perturbation bounds.

\section{Conclusion}
\label{sec:conclusion}

Future-referential feedback can be formulated as an externally causal multi-time quantum-information problem. The induced message channel provides the natural fixed-point object, and its Choi state the natural witness separating genuinely quantum feedback from measure-and-prepare feedback. A sequence of models shows progressively stronger behavior: randomization regularizes a classical anti-predictor; a solvable unitary family supports a coherent attractive output; non-orthogonal future records and partial feedback yield a non-entanglement-breaking phase; and perturbation bounds---tracking both the channel and the drift of its stationary state---provide an explicit interval-enclosed square where stability, coherence, informativeness, and preservation of quantum correlations provably coexist. The framework turns a speculative ``future leakage'' narrative into a concrete programme of channel classification, fixed-point analysis, and certified quantum-information inequalities. The certified coexistence region of Proposition~\ref{prop:combined} is deliberately conservative---a square of half-width $0.0013\pi$ sitting well inside a region where direct evaluation shows the same four properties---and should be read as a proof of principle that the four properties provably coexist, not as a map of the phase. Every certified statement is reproducible end-to-end from the accompanying ancillary files, which regenerate each number quoted in this paper.

\section*{Declarations}

\subsection*{Funding}
No funding was received to assist with the preparation of this manuscript.

\subsection*{Competing interests}
The author declares no competing interests.

\subsection*{Author contributions}
The author conceived the study, designed the models, derived and proved all propositions and lemmas, wrote the manuscript, and is responsible for the final content and correctness of the work.

\subsection*{Use of AI tools}
Beyond AI-assisted copy editing, generative AI tools were used in producing this work as follows: Microsoft 365 Copilot assisted with literature discovery, early drafting, algebraic organization, numerical code, figures, and document preparation. Anthropic Claude assisted with manuscript revision and independent numerical cross-checks of the certified quantities reported in the paper. All AI-assisted text, code, and numerical outputs were independently verified by the author, who is fully accountable for the correctness of all claims, calculations, citations, and software outputs.

\subsection*{Data and code availability}
No empirical data were generated. The Python scripts that construct the explicit $8\times 8$ unitaries, perform Pauli-basis channel reconstruction, compute fixed points, Choi partial transposes, entropies, and all certificate thresholds and margins---including an independent cross-check implementation of every number quoted in this paper and the machine-verified certificate of Appendix~\ref{app:protocol}---are included as ancillary files with this submission. They are maintained in the code repository at \url{https://github.com/erankopel/quantum-future-feedback} (MIT license; release \texttt{v1.0.0}, commit \texttt{ab0dab860b05}). The repository also regenerates all figures, documents the reference-point search of Sec.~\ref{sec:openset}, and reports the empirical conservatism factors of the perturbation constants quoted in Sec.~\ref{sec:limitations}.

\begin{acknowledgments}
The author invites review from researchers in process tensors, quantum feedback control, quantum information theory, and verified numerical analysis.
\end{acknowledgments}

\appendix

\section{Perturbation estimates}
\label{app:bounds}

Throughout, $\lVert\cdot\rVert_p$ denotes Schatten norms for operators---so that $\lVert\cdot\rVert_\infty$ is the spectral norm, $\lVert\cdot\rVert_2=\lVert\cdot\rVert_F$ the Frobenius norm, and $\lVert\cdot\rVert_1$ the trace norm---and the spectral ($p=2$ induced) or Euclidean norm for the $3\times 3$ Bloch data; only $\theta$ and $\phi$ vary, while $\kappa_0,\beta_0$ are fixed.

\emph{Chord bounds.} $R_y(\alpha)=e^{-i\alpha Y/2}$ has eigenvalues $e^{\pm i\alpha/2}$, so $\lVert R_y(\alpha)-R_y(\alpha')\rVert_\infty=2|\sin\frac{\alpha-\alpha'}{4}|$. Both blocks of \eqref{eq:UW} change by rotations of angle $\pm2\Delta\theta$, whence $\lVert U_W(\theta)-U_W(\theta_0)\rVert_\infty\le 2\sin(|\Delta\theta|/2)$. Since $\mathrm{SWAP}^2=\id$, $U_{\mathrm{fb}}(\phi)=e^{-i\phi\,\mathrm{SWAP}}$ has eigenvalues $e^{\mp i\phi}$ and $\lVert U_{\mathrm{fb}}(\phi)-U_{\mathrm{fb}}(\phi_0)\rVert_\infty\le 2\sin(|\Delta\phi|/2)$. By unitary invariance and the triangle inequality applied to the product \eqref{eq:round},
\begin{equation}
\lVert U-U_0\rVert_\infty\le 2\sin(|\Delta\theta|/2)+2\sin(|\Delta\phi|/2)=2s.
\label{eq:chord}
\end{equation}

\emph{Channel data.} For any operator $X$, $\lVert UXU^\dagger-U_0XU_0^\dagger\rVert_1\le 2\lVert U-U_0\rVert_\infty\lVert X\rVert_1$, and the partial trace is trace-norm contractive. With $\lVert\sigma_j\otimes|00\rangle\langle 00|\rVert_1=2$,
\begin{equation}
|A_{ij}-A^0_{ij}|\le\tfrac12\lVert\Phi(\sigma_j)-\Phi_0(\sigma_j)\rVert_1\le\tfrac12\cdot 2\,(2s)\,2=4s,
\end{equation}
so $\lVert A-A_0\rVert_2\le\lVert A-A_0\rVert_F\le\sqrt{9\cdot16s^2}=12s$. Similarly $|c_i-c^0_i|\le 2(2s)\cdot 1= 4s$ and $\lVert\bm{c}-\bm{c}_0\rVert_2\le 4\sqrt3\,s$.

\emph{Fixed point.} Writing $\id-A=(\id-A_0)[\id-(\id-A_0)^{-1}(A-A_0)]$ and expanding the Neumann series, valid while $12s\,\lVert(\id-A_0)^{-1}\rVert_2<1$,
\begin{equation}
\lVert\bm{v}^*-\bm{v}_0^*\rVert_2\le\frac{\lVert(\id-A_0)^{-1}\rVert_2\,\big[\lVert\Delta\bm{c}\rVert_2+\lVert\Delta A\rVert_2\lVert\bm{v}_0^*\rVert_2\big]}{1-\lVert(\id-A_0)^{-1}\rVert_2\,\lVert\Delta A\rVert_2},
\end{equation}
which is \eqref{eq:resolvent} after inserting \eqref{eq:Ac-bounds}.

\emph{Choi partial transpose.} $J-J_0=(\mathrm{id}_2\otimes\mathrm{Tr}_{FL})\big[(\id_2\otimes U)\Pi(\id_2\otimes U)^\dagger-(\id_2\otimes U_0)\Pi(\id_2\otimes U_0)^\dagger\big]$ with $\Pi=|\Phi^+\rangle\langle\Phi^+|\otimes|00\rangle\langle00|$, so $\lVert J-J_0\rVert_1\le 2\lVert U-U_0\rVert_\infty\le 4s$. Partial transposition preserves the Frobenius norm, so $\lVert H-H_0\rVert_F=\lVert J-J_0\rVert_F$, and since $\lVert\cdot\rVert_\infty\le\lVert\cdot\rVert_F\le\lVert\cdot\rVert_1$ we obtain $\lVert H-H_0\rVert_\infty\le 4s$; Weyl's inequality \cite{Bhatia1997} then gives \eqref{eq:npt-bound}.

\emph{Pre-feedback state and information.} $\rho_{FL}=\mathrm{Tr}_M\big[V(\rho_M^{*}\otimes|00\rangle\langle00|)V^\dagger\big]$ with $V=U_{\mathrm{weak}}(\kappa_0)U_W(\theta)$. By the triangle inequality, contractivity of CPTP maps, and $\lVert\rho-\rho_0\rVert_1=\lVert\Delta\bm{v}\rVert_2$ for qubits,
\begin{align}
T(\rho_{FL},\rho_{FL}^0)&\le T\big(V\varrho V^\dagger,V\varrho_0V^\dagger\big)+T\big(V\varrho_0V^\dagger,V_0\varrho_0V_0^\dagger\big)\nonumber\\
&\le\tfrac12\lVert\bm{v}^*-\bm{v}_0^*\rVert_2+2\sin(|\Delta\theta|/2)\;\le\;\tau(s),
\end{align}
with $\varrho=\rho_M^{*}\otimes|00\rangle\langle00|$ and $T(\rho,\sigma)=\tfrac12\lVert\rho-\sigma\rVert_1$. The Fannes--Audenaert inequality \cite{Audenaert2007} bounds $|\Delta S|\le T\log_2(d-1)+h_2(T)$ for $T\le 1-1/d$; applying it to $\rho_{FL}$ ($d=4$) and, after the contractive partial traces, to $\rho_F$ and $\rho_L$ ($d=2$), and using monotonicity of both terms on $[0,\tfrac12]$, yields \eqref{eq:info-bound}. The binding validity condition is the $d=2$ one, $\tau\le 1/2$.

\section{From working enclosures to a machine-verified certificate}
\label{app:protocol}

The working intervals of Table~\ref{tab:enclosures} were formed by recomputing the reference matrices and spectra from the explicit decimal parameter values in double precision and outwardly enlarging the displayed results by multiple terminal digits; proof inequalities use only unfavorable interval endpoints. The following directed-rounding workflow promotes such working enclosures to a machine-verified certificate; its execution is reported at the end of this appendix.

\emph{(i) Parameter representation.} Represent the four angle ratios as rational decimal intervals, e.g.\ $\theta_0/\pi\in[0.16345852,0.16345854]$, enclose $\pi$ by a directed-rounding interval, and form each angle by interval multiplication; store the input intervals as part of the proof artifact.

\emph{(ii) Verified unitaries.} Evaluate $\sin$ and $\cos$ with outward rounding; construct \eqref{eq:UW}--\eqref{eq:Ufb} and the product \eqref{eq:round} with outward-rounded interval matrix arithmetic; verify that the unitarity residual encloses zero and record an upper bound on $\lVert U^\dagger U-\id\rVert$.

\emph{(iii) Channel reconstruction and CPTP checks.} Reconstruct $A$ and $\bm{c}$ from Eq.~\eqref{eq:affine} at the interval level; build the interval Choi matrix; certify trace preservation, Hermiticity, positivity of $J$, and $\tr J=1$.

\emph{(iv) Verified spectra.} Enclose $\lVert A\rVert_2$ via verified Hermitian eigenvalue bounds on $A^{T}A$; enclose $\lVert(\id-A)^{-1}\rVert_2$ by interval linear solves or the largest eigenvalue of $(\id-A)^{-T}(\id-A)^{-1}$; enclose the fixed point by interval Gaussian elimination, a Krawczyk operator, or an interval Newton step, verifying existence and uniqueness within the returned box; enclose the four eigenvalues of $H=J^{\pt}$ by residual-based inclusion or a verified Hermitian eigensolver.

\emph{(v) Verified entropies.} Enclose the eigenvalues of $\rho_F$, $\rho_L$, and $\rho_{FL}$, and evaluate $-x\log_2 x$ with a monotonic interval extension near zero.

\emph{(vi) Certificate regeneration.} Regenerate the thresholds \eqref{eq:thresholds} and the margins \eqref{eq:margins} of Proposition~\ref{prop:combined} using rational endpoint arithmetic, and record a machine-readable certificate comprising: source code with a fixed commit hash, dependency lock file and operating-system metadata, the directed-rounding library and precision settings, the input parameter interval file, serialized interval matrices $A$, $\bm{c}$, $\rho_{FL}$, $J$, and $H$, the verified spectral enclosures, a single command that regenerates every numerical statement, and checksums for the certificate, figures, and manuscript.

\emph{Execution report.} The workflow above has been executed using rigorous ball arithmetic (Arb, via \texttt{python-flint} 0.9.0) at 256-bit working precision; ball comparisons in this framework are certified, evaluating true only when the relation holds for every value in the enclosing balls. Twenty-three checks pass: unitarity of $U_{\mathrm{round}}$ (residual below $10^{-60}$); the CPTP structure of the induced channel (Choi trace, Hermiticity, trace preservation, positivity); containment of all six reference quantities in the working intervals of Table~\ref{tab:enclosures}, with certified enclosure radii of order $10^{-74}$ (order $10^{-53}$ for $I_0$, dominated by an explicit Fannes--Audenaert slack term); the four Choi partial-transpose eigenvalue intervals \eqref{eq:H0-spectrum}; the boundary of Proposition~\ref{prop:npt}; the four margins \eqref{eq:margins} of Proposition~\ref{prop:combined}, including the validity condition $\tau<1/2$; the safety of the thresholds \eqref{eq:thresholds}, where the exact equality $4\times 0.05127500=0.2051$ is checked in rational arithmetic; and agreement of the closed-form family of Proposition~\ref{prop:exact} with the full unitary construction. The rank deficiency of $\rho_{FL}$ (a structural two-term mixture) is handled by certifying its distance to an explicit rank-two form below $10^{-55}$ in Frobenius norm and propagating a corresponding entropy slack. In addition, an independent second-stack audit re-verifies every check using directed-rounded interval arithmetic (\texttt{mpmath}), exact rational inversion of approximate diagonalizers, and Gershgorin disc enclosures with cluster counting in place of a certified eigensolver; in particular, the doubly degenerate zero eigenvalue of $\rho_{FL}$ is there enclosed by cluster counting (exactly two eigenvalues in an interval of width $\sim 10^{-15}$), independently of the rank-two reduction used by the first stack. All twenty-four audit checks pass and agree with the certificate (the two totals differ only in grouping: the primary stack counts twenty-three assertions grouped as C1--C8, the audit twenty-four, A1--A24). The certifying scripts and the machine-readable certificates (JSON, with environment metadata and SHA-256 checksums) are included in the ancillary files and in the code repository (release \texttt{v1.0.0}, commit \texttt{ab0dab860b05}). The verifications are conditional on the correctness of the two underlying arithmetic libraries, which are distinct codebases exercised with different algorithms, so implementation errors are uncorrelated; a fully external audit on a third toolchain would strengthen the result further.

\end{document}